\theoremstyle{plain}
\newtheorem{theorem}{Theorem}
\theoremstyle{definition}
\theoremstyle{remark}
\newtheorem{remark}{Remark}
\numberwithin{equation}{section}
\newcommand{\ii}{\infty}
\newcommand\R{{\ensuremath {\mathbb R} }}
\newcommand\C{{\ensuremath {\mathbb C} }}
\newcommand\bP{{\ensuremath {\mathds P} }}
\newcommand\bQ{{\ensuremath {\mathds Q} }}
\renewcommand\phi{\varphi}
\newcommand{\gS}{\mathfrak{S}}
\newcommand{\wto}{\rightharpoonup}
\newcommand{\cE}{\mathcal{E}}
\newcommand{\eps}{\epsilon}
\renewcommand{\epsilon}{\varepsilon}
\newcommand\pscal[1]{{\ensuremath{\left\langle #1 \right\rangle}}}
\newcommand{\norm}[1]{ \left| \! \left| #1 \right| \! \right| }
\newcommand{\tr}{{\rm Tr}\,}
\renewcommand{\geq}{\geqslant}
\renewcommand{\leq}{\leqslant}
\newcommand{\nn}{\nonumber}
\title[Semi-classical limit of the Levy-Lieb functional]{Semi-classical limit of the Levy-Lieb functional in Density Functional Theory}
\author[M. Lewin]{Mathieu Lewin}
\address{CNRS \& CEREMADE, Universit\'e Paris-Dauphine, PSL Research University, F-75016 Paris, France} 
\email{mathieu.lewin@math.cnrs.fr}
\date{Final version to appear in \emph{Comptes rendus de l'Acad\'emie des Sciences, Math\'ematiques}\ \copyright~2017 by the author. This paper may be reproduced, in its entirety, for non-commercial purposes}
\begin{document}

\maketitle

\begin{abstract}
In a recent work, Bindini and De Pascale have introduced a regularization of $N$-particle symmetric probabilities which preserves their one-particle marginals. In this short note, we extend their construction to mixed quantum fermionic states. This enables us to prove the convergence of the Levy-Lieb functional in Density Functional Theory, to the corresponding multi-marginal optimal transport in the semi-classical limit. Our result holds for mixed states of any particle number $N$, with or without spin.
\end{abstract}

\section{Extending the Bindini-De Pascale construction}

Let $\bP$ be a symmetric  $N$-particle probability measure over $(\R^d)^N$ and let 
$$\rho_\bP(x_1)=\int_{(\R^d)^{N-1}}d\bP(x_1,x_2,...,x_N)$$
be its one-particle marginal. We assume that $\sqrt{\rho_\bP}\in H^1(\R^d)$, as is appropriate in Density Functional Theory (DFT)~\cite{Harriman-81,Lieb-83b}. An interesting question, important for applications in DFT, is to approximate $\bP$ by a regular $N$-particle probability density $\bP_\eps$ with the same density $\rho_{\bP_\eps}=\rho_\bP$. A more challenging problem, considered first in~\cite{CotFriKlu-13,BinPas-17} for $N=2,3$ and solved for all $N\geq2$ in this note, is to find a fermionic quantum state $\Gamma_\eps$ with the same density $\rho_{\Gamma_\eps}=\rho_\bP$ and a controlled kinetic energy.

Consider a radial function $\chi\in C^\ii_c(\R^d,\R)$ with support in the unit ball of $\R^d$, such that $\int_{\R^d}\chi^2=1$, and denote $\chi_\eps(x)=\eps^{-d/2}\chi(x/\eps)$. In~\cite{BinPas-17}, Bindini and De Pascale have introduced the following elegant regularization 
\begin{multline}
\bP_\eps(x_1,...,x_N)\\=\iint_{\R^{2dN}}\prod_{k=1}^N\frac{\rho_\bP(x_k)\chi_\eps(x_k-z_k)^2\chi_\eps(z_k-y_k)^2}{\rho_\bP\ast\chi_\eps^2(z_k)}d\bP(y_1,...,y_N)\;dz_1\cdots dz_N.
\label{eq:Bindini-Pascale}
\end{multline}
We assume in the following that $\bP$ has its support in
$$D_\alpha^c:=\left\{X=(x_1,...,x_N)\in (\R^d)^N\ :\ |x_i-x_j|\geq \alpha,\ \forall i\neq j\right\},$$
for some $\alpha>0$, a condition which is satisfied for minimizers of the Coulomb $N$-particle energy~\cite{ButChaPas-17}. Since $\chi$ is supported in the unit ball, $\bP_\eps$ is then supported on the set $D^c_{\alpha-4\eps}$ where all the particles are at a distance $\alpha-4\eps$ to each other. In the following we always assume that $\eps<\alpha/4$.
The purpose of the integration over the $y_k$'s in~\eqref{eq:Bindini-Pascale} is to regularize the probability $\bP$, since the convolution
$$\bQ_\eps(z_1,...,z_N)=\int_{(\R^d)^N}\prod_{k=1}^N\chi_\eps(z_k-y_k)^2d\bP(y_1,...,y_N)$$
is now $C^\ii$. However its density is $\rho_{\bQ_\eps}=\rho_\bP\ast\chi_\eps^2$
and the purpose of the integration over the $z_k$'s is to map back the density to $\rho_\bP$. Indeed, integrating~\eqref{eq:Bindini-Pascale} over $x_2,...,x_N$, we get $\prod_{k=2}^N\rho_\bP\ast\chi^2_\eps(z_k)$ in the numerator, which cancels with the denominator. The corresponding integrals over $z_2,...,z_N$ give $(\int\chi_\eps^2)^{N-1}=1$ and we end up with
\begin{align*}
\rho_{\bP_\eps}(x_1) =\rho_{\bP}(x_1)\int_{\R^d}\int_{\R^d}\frac{\chi_\eps(x_1-z_1)^2\chi_\eps(z_1-y_1)^2}{\rho_{\bP}\ast\chi_\eps^2(z_1)}\rho_{\bP}(y_1)\,dy_1\,dz_1=\rho_\bP(x_1).
\end{align*}
A somewhat different method was introduced in~\cite{CotFriKlu-13} by Cotar, Friesecke and Kluppelberg. 

In~\cite{BinPas-17}, Bindini and De Pascale prove that 
\begin{equation}
 \int_{\R^{dN}}|\nabla\sqrt{\bP_\eps}|^2\leq N\left(\int_{\R^d}|\nabla\sqrt{\rho_\bP}|^2+\frac{1}{\eps^2}\int_{\R^d}|\nabla\chi|^2\right)
 \label{eq:estim_kinetic_Bindini_dePascale}
\end{equation}
and use this to get some information on the semi-classical limit of the Levy-Lieb functional (to be discussed later in Section~\ref{sec:Levy-Lieb}). 
Unfortunately, the probability~\eqref{eq:Bindini-Pascale} is really a classical object. Although for bosons one can use the symmetric wavefunction $\Psi_\eps=\sqrt{\bP_\eps}$, for fermions the wavefunction must be anti-symmetric with respect to the permutations of its $N$ variables. In space dimensions $d=1$ and $d=2$, one can use the multiplication by 
$$U(x_1,...,x_N)=\prod_{1\leq j<k\leq N}\frac{x_j-x_k}{|x_j-x_k|}$$
which maps bosons onto fermions and conversely, whatever the value of $N$ (if $d=2$ we identify $\R^2$ with $\C$). Since $U$ is $C^\ii$ on $D_\alpha^c$ where $\bP_\eps$ is supported, $U\sqrt{\bP_\eps}$ has the same regularity as $\sqrt{\bP_\eps}$ and satisfies an estimate similar to~\eqref{eq:estim_kinetic_Bindini_dePascale}, with a worse dependence in $N$. In dimension $d\geq3$, the situation is more complex, due to some well known topological obstructions~\cite{LeiMyr-77,Myrheim-99,Ouvry-07}. Indeed, for $N\geq2$ and $d\geq3$, there does not exist any anti-symmetric function $U:D_\alpha^c\subset (\R^d)^N\to \C$ which is continuous and satisfies $|U(x_1,...,x_N)|=1$. Otherwise, consider for instance the odd function
$$z\mapsto U(y+z,y-z,x_3,...,x_N)$$
for fixed $y,x_2,...,x_N$. By the Borsuk-Ulam theorem, it must vanish on any sphere $\{|z|=R\}$, and this would contradict $|U|=1$ for $|y|$ and $R$ large enough. In~\cite{CotFriKlu-13,BinPas-17} the authors use the spin variable to antisymmetrize $\sqrt{\bP_\eps}$ but, in dimension $d=3$, this has so far limited the results to $N=2$ and $N=3$.

Our idea in this short note is to overcome these difficulties using the concept of mixed states. We propose the following simple quantum extension of~\eqref{eq:Bindini-Pascale}, for fermions:
\begin{multline}
\Gamma_\eps
=\iint_{\R^{dN}\times \R^{dN}}
\sqrt{\rho_\bP}^{\otimes N}\;|\chi_{\eps,z_1}\wedge\cdots\wedge\chi_{\eps,z_N}\rangle\langle\chi_{\eps,z_1}\wedge\cdots\wedge\chi_{\eps,z_N}|\;\sqrt{\rho_\bP}^{\otimes N}\\
\times\prod_{k=1}^N\frac{\chi_\eps(z_k-y_k)^2}{\rho_\bP\ast\chi_\eps^2(z_k)}d\bP(y_1,...,y_N)\;dz_1\cdots dz_N.
\label{eq:quantum_state}
\end{multline}
Here $\Gamma_\eps$ is a non-negative self-adjoint operator which acts on the $N$-particle fermionic space $L^2_a((\R^d)^N)=\bigwedge_1^N L^2(\R^d)$. For simplicity we do not consider spin here. It is easy to extend the trial state~\eqref{eq:quantum_state} to the case of particles with $q$ spin states, for instance by putting all the particles in the same spin state. One can also construct any desired eigenvector of the total spin operator by adding an appropriate spin state to each function $\chi_{\eps,z_k}$.

In~\eqref{eq:quantum_state}, we use the notation $\chi_{\eps,z}(x)=\chi_\eps(x-z)$ and recall that $\eps<\alpha/4$ such that the functions $\chi_{\eps,z_1},...,\chi_{\eps,z_N}$ have disjoint supports.  We use the ket-bra notation $|\Psi\rangle\langle\Psi|$ for the orthogonal projection on $\Psi$, defined by $|\Psi\rangle\langle\Psi|\phi=\pscal{\Psi,\phi}\Psi$. We call
$$
\phi_1\wedge\cdots\wedge\phi_N(x_1,...,x_N)=\frac{1}{\sqrt{N!}}\sum_{\sigma\in\gS_N}\eps(\sigma)\phi_{{\sigma(1)}}(x_1)\cdots \phi_{\sigma(N)}(x_N)
$$
the Slater determinant. Finally, the $N$-fold tensor product is defined by
$$\sqrt{\rho_\bP}^{\otimes N}(x_1,...,x_N)=\prod_{j=1}^N\sqrt{\rho_\bP}(x_j).$$
In~\eqref{eq:quantum_state}, $\sqrt{\rho_\bP}^{\otimes N}$ is understood as a multiplication operator on $L^2_a((\R^d)^N)$. In particular, we have
\begin{multline*}
\sqrt{\rho_\bP}^{\otimes N}\;|\chi_{\eps,z_1}\wedge\cdots\wedge\chi_{\eps,z_N}\rangle\langle\chi_{\eps,z_1}\wedge\cdots\wedge\chi_{\eps,z_N}|\;\sqrt{\rho_\bP}^{\otimes N}\\
=|\sqrt{\rho_\bP}\chi_{\eps,z_1}\wedge\cdots\wedge\sqrt{\rho_\bP}\chi_{\eps,z_N}\rangle\langle\sqrt{\rho_\bP}\chi_{\eps,z_1}\wedge\cdots\wedge\sqrt{\rho_\bP}\chi_{\eps,z_N}|.
\end{multline*}
The integral kernel of $\Gamma_\eps$ is given by 
\begin{multline}
\Gamma_\eps(x_1,...,x_N;x_1',...,x_N')=\frac{1}{N!}\iint_{\R^{dN}\times \R^{dN}}\det\big(\chi_{\eps,z_i}(x_j)\big)\det\big(\chi_{\eps,z_i}(x'_j)\big)\\
\times\prod_{k=1}^N\frac{\sqrt{\rho_\bP}(x_k)\sqrt{\rho_\bP}(x'_k)\chi(z_k-y_k)^2}{\rho_\bP\ast\chi_\eps^2(z_k)}d\bP(y_1,...,y_N)\;dz_1\cdots dz_N.
\label{eq:quantum_state_kernel}
\end{multline}
Since the $\chi_{\eps,z_k}$ have disjoint supports, we have
\begin{align}
\Big(\det\big(\chi_{\eps,z_i}(x_j)\big)\Big)^2&= \sum_{\sigma,\sigma'\in\gS_N}\eps(\sigma)\,\eps(\sigma')\prod_{k=1}^N \chi_{\eps,z_{\sigma(k)}}(x_k)\chi_{\eps,z_{\sigma'(k)}}(x_k)\nn\\
&=\sum_{\sigma\in\gS_N}\prod_{k=1}^N \chi_{\eps}(x_k-z_{\sigma(k)})^2.
\label{eq:formule_carre}
\end{align}
Using the symmetry of $\bP$, we have
$$\int_{\R^{dN}}\prod_{k=1}^N\chi_\eps(z_k-y_k)^2d\bP(y_1,...,y_N)=\int_{\R^{dN}}\prod_{k=1}^N\chi_\eps(z_{\sigma(k)}-y_k)^2d\bP(y_1,...,y_N)$$
for every $\sigma\in\gS_N$. Therefore, in~\eqref{eq:formule_carre} the terms in the sum over the permutations $\sigma\in\gS_N$ all contribute the same 	amount. 
We thus find that the diagonal of~$\Gamma_\eps$ coincides with the Bindini-De Pascale probability density:
$$\Gamma_\eps(x_1,...,x_N;x_1,...,x_N)=\bP_\eps(x_1,...,x_N).$$
From this we conclude that
\begin{multline*}
\tr(\Gamma_\eps)=\int_{(\R^d)^N}\Gamma_\eps(x_1,...,x_N;x_1,...,x_N)\,dx_1\cdots dx_N\\
=\int_{\R^{dN}}d\bP_\eps(x_1,...,x_N)=1. 
\end{multline*}
and $\Gamma_\eps$ is a proper fermionic (mixed) state. We recall that the one-particle density of $\Gamma_\eps$ is defined by duality, requiring that $\tr(\phi(x_1)\Gamma_\eps)=\int_{\R^d}\phi\,\rho_{\Gamma_\eps}$ for every $\phi\in L^\ii(\R^d)$. For a continuous kernel such as $\Gamma_\eps$, we have
$$\rho_{\Gamma_\eps}(x_1)=\int_{\R^{dN}}\Gamma_\eps(x_1,...,x_N;x_1,...,x_N)\,dx_2\cdots dx_N=\rho_{\bP_\eps}(x_1)=\rho_{\bP}(x_1).$$
From the Cauchy-Schwarz inequality we have
$$\int_{(\R^d)^N}|\nabla\sqrt{\bP_\eps}|^2\leq \tr(-\Delta)\Gamma_\eps,$$
which is in the wrong direction to conclude anything about the kinetic energy of $\Gamma_\eps$ using the estimate~\eqref{eq:estim_kinetic_Bindini_dePascale} of Bindini-De Pascale. But we can prove the following theorem, which implies~\eqref{eq:estim_kinetic_Bindini_dePascale}.

\begin{theorem}[Estimates on $\Gamma_\eps$]\label{thm:estimates}
Let $\bP$ be a symmetric $N$-particle density with support in $D^c_\alpha$ for some $\alpha>0$ and such that $\sqrt{\rho_\bP}\in H^1(\R^d)$. Let $\Gamma_\eps$ be defined by~\eqref{eq:quantum_state}. Then, for $\eps<\alpha/4$ we have
\begin{equation}
 \tr(-\Delta)\Gamma_\eps = N\left(\int_{\R^d}|\nabla\sqrt{\rho_\bP}|^2+\frac{1}{\eps^2}\int_{\R^d}|\nabla\chi|^2\right).
 \label{eq:estim_kinetic}
\end{equation}
In addition, for every symmetric function $\Phi\in C^2(D^c_{\alpha-4\eps})$, we have
\begin{multline}
 \left|\tr(\Phi\Gamma_\eps)-\int_{\R^{dN}}\Phi\, d\bP\right|=\left|\int_{\R^{dN}}\Phi\, d\bP_\eps-\int_{\R^{dN}}\Phi\, d\bP\right|\\
 \leq \eps^2\,\Bigg\{\sum_{j=1}^N\norm{\nabla_j\Phi}_{L^\ii(D^c_{\alpha-4\eps})}\left(\int_{\R^{d}}|\nabla\rho_\bP|\right)\left(\int_{\R^d}|u|^2\chi(u)^2\,du\right)\\
 +2\sum_{j,k=1}^N\norm{\nabla_{j}\nabla_{k}\Phi}_{L^\ii(D^c_{\alpha-4\eps})}\Bigg\}.
 \label{eq:estim_potential}
\end{multline}
In particular, $\bP_\eps\wto\bP$, as was proved in~\cite{BinPas-17}.
\end{theorem}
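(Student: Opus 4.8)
The starting observation is that $\Gamma_\eps$ is a continuous superposition of rank-one operators $|\Theta_Z\rangle\langle\Theta_Z|$, with $\Theta_Z:=g_{z_1}\wedge\cdots\wedge g_{z_N}$ and $g_z:=\sqrt{\rho_\bP}\,\chi_{\eps,z}\in H^1(\R^d)$, integrated against the weight $\prod_{k}\chi_\eps(z_k-y_k)^2/\rho_\bP\ast\chi_\eps^2(z_k)$ relative to $d\bP(Y)\,dz_1\cdots dz_N$. Since $\eps<\alpha/4$ and $\supp\bP\subset D^c_\alpha$, on the support of this weight the points $z_k$ are pairwise at distance $\ge\alpha-2\eps$, so the $g_{z_k}$, and hence the $\nabla g_{z_k}$, have pairwise disjoint supports. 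Therefore the Gram matrices for the forms $\pscal{\cdot,\cdot}$ and $\pscal{\cdot,-\Delta\,\cdot}$ are diagonal, with $\norm{g_{z_k}}^2=\rho_\bP\ast\chi_\eps^2(z_k)$, and in the expansion of $\pscal{\Theta_Z,(-\Delta)\Theta_Z}$ over $\gS_N\times\gS_N$ only the pairings $\sigma=\sigma'$ survive; with $-\Delta=\sum_{j=1}^N(-\Delta_{x_j})$ this gives
\[
\pscal{\Theta_Z,(-\Delta)\Theta_Z}=\sum_{l=1}^N\norm{\nabla g_{z_l}}^2\prod_{m\ne l}\norm{g_{z_m}}^2 .
\]

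To prove \eqref{eq:estim_kinetic} I would insert this into $\tr(-\Delta)\Gamma_\eps$ (legitimate by Tonelli, the integrand being nonnegative, and finite by what follows). The product $\prod_{m\ne l}\norm{g_{z_m}}^2$ cancels all but one factor of the denominator $\prod_k\rho_\bP\ast\chi_\eps^2(z_k)$; integrating out $z_m,y_m$ for $m\ne l$ (each $\int\chi_\eps^2=1$) and then integrating $\chi_\eps(z_l-y_l)^2$ against $\bP$, whose one-particle marginal is $\rho_\bP$, reproduces $\rho_\bP\ast\chi_\eps^2(z_l)$ and cancels the last denominator, leaving $\tr(-\Delta)\Gamma_\eps=N\int_{\R^d}\norm{\nabla(\sqrt{\rho_\bP}\,\chi_{\eps,z})}_{L^2(\R^d)}^2\,dz$. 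Expanding as $\norm{\nabla(fg)}^2=\int g^2|\nabla f|^2+\int f^2|\nabla g|^2+\tfrac12\int\nabla(f^2)\cdot\nabla(g^2)$ with $f=\sqrt{\rho_\bP}$, $g=\chi_{\eps,z}$: the cross term vanishes after performing the $z$-integration first, since $\int_{\R^d}\nabla_x(\chi_\eps(x-z)^2)\,dz=\nabla_x(1)=0$; the first term integrates to $N\int|\nabla\sqrt{\rho_\bP}|^2$ and the last to $N\norm{\nabla\chi_\eps}_{L^2}^2=N\eps^{-2}\int|\nabla\chi|^2$. This is \eqref{eq:estim_kinetic}, and \eqref{eq:estim_kinetic_Bindini_dePascale} follows from it by the Cauchy--Schwarz inequality recalled before the theorem.

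For \eqref{eq:estim_potential}, the first equality is just the identity $\Gamma_\eps(X;X)=\bP_\eps(X)$ established above, so $\tr(\Phi\Gamma_\eps)=\int\Phi\,d\bP_\eps$. To bound $\int\Phi\,d\bP_\eps-\int\Phi\,d\bP$ I would introduce the probability measure $\mu_\eps$ on $(\R^d)^{3N}$ with density $\prod_{k=1}^N\rho_\bP(x_k)\chi_\eps(x_k-z_k)^2\chi_\eps(z_k-y_k)^2/\rho_\bP\ast\chi_\eps^2(z_k)$ relative to $dX\,dZ\,d\bP(Y)$; its $X$-marginal is $\bP_\eps$, its $Y$-marginal is $\bP$, and conditionally on $Y$ the pairs $(x_k,z_k)$ are independent with laws depending only on $y_k$. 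Thus $\int\Phi\,d\bP_\eps-\int\Phi\,d\bP=\int(\Phi(X)-\Phi(Y))\,d\mu_\eps$, and on $\supp\mu_\eps$ one has $|x_k-y_k|\le2\eps$, so (since $\supp\bP\subset D^c_\alpha$) the whole segment from $Y$ to $X$ stays in $D^c_{\alpha-4\eps}$ where $\Phi\in C^2$. A second-order Taylor expansion of $\Phi$ about $Y$ is therefore valid; using $|x_j-y_j|\le2\eps$, its remainder contributes at most $2\eps^2\sum_{j,k}\norm{\nabla_j\nabla_k\Phi}_{L^\ii(D^c_{\alpha-4\eps})}$ after integration against $\mu_\eps$. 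It remains to treat the first-order term $\int\sum_j\nabla_j\Phi(Y)\cdot(x_j-y_j)\,d\mu_\eps$. Writing $x_j-y_j=(x_j-z_j)+(z_j-y_j)$, the $(z_j-y_j)$-part vanishes: integrating first in $x_j$ cancels $\rho_\bP\ast\chi_\eps^2(z_j)$ and leaves $\int(z_j-y_j)\chi_\eps(z_j-y_j)^2\,dz_j=0$ by evenness. For the $(x_j-z_j)$-part, the conditional mean given $Y$ equals $\int\big(\int u\,(\rho_\bP(z_j+u)-\rho_\bP(z_j))\chi_\eps(u)^2\,du\big)\chi_\eps(z_j-y_j)^2\,dz_j/\rho_\bP\ast\chi_\eps^2(z_j)$ — the piece with $\rho_\bP(z_j)$ alone again dropping out by evenness — so that, bounding $|\nabla_j\Phi(Y)|\le\norm{\nabla_j\Phi}_{L^\ii}$ and integrating the leftover $\chi_\eps(z_j-y_j)^2$ against $\rho_\bP(y_j)\,dy_j$, the denominator $\rho_\bP\ast\chi_\eps^2(z_j)$ is exactly reconstituted and cancels. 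What remains, $\sum_j\norm{\nabla_j\Phi}_{L^\ii}\int_{\R^d}\big|\int u\,(\rho_\bP(z_j+u)-\rho_\bP(z_j))\chi_\eps(u)^2\,du\big|\,dz_j$, is bounded — writing $\rho_\bP(z_j+u)-\rho_\bP(z_j)=\int_0^1 u\cdot\nabla\rho_\bP(z_j+su)\,ds$ and using translation invariance of $\int|\nabla\rho_\bP|$ — by $\big(\int|\nabla\rho_\bP|\big)\big(\int|u|^2\chi_\eps(u)^2\,du\big)\sum_j\norm{\nabla_j\Phi}_{L^\ii}=\eps^2\big(\int|\nabla\rho_\bP|\big)\big(\int|u|^2\chi(u)^2\,du\big)\sum_j\norm{\nabla_j\Phi}_{L^\ii}$, the first term of \eqref{eq:estim_potential}. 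Adding the Taylor remainder gives \eqref{eq:estim_potential}, and $\bP_\eps\wto\bP$ follows by applying it to each test function (which, as $\bP$ and $\bP_\eps$ live in $D^c_{\alpha-4\eps}$ for small $\eps$, may be taken supported there) and letting $\eps\to0$.

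I expect the delicate point to be precisely this surviving first-order term: it is the only contribution not annihilated by a parity argument, and the factor $1/\rho_\bP\ast\chi_\eps^2(z_j)$ a priori threatens both its integrability and the $\eps^2$ rate. The resolution — that this denominator is exactly restored upon integrating $\chi_\eps^2$ against the one-particle marginal of $\bP$, after which what is left is genuinely of order $\eps^2$, being quadratic in the displacement and linear in $\nabla\rho_\bP$ — is the heart of the estimate; the very same cancellation of denominators (first via $\prod_{m\ne l}$, then via the marginal of $\bP$) is what makes \eqref{eq:estim_kinetic} an identity rather than merely an inequality.
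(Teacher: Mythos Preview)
Your proof is correct and follows essentially the same approach as the paper's: the disjoint-support reduction of the Slater determinant to a diagonal sum, the cancellation of the denominators against $\rho_\bP\ast\chi_\eps^2$ to reach $N\int\!\!\int|\nabla(\sqrt{\rho_\bP}\chi_{\eps,z})|^2$, and, for \eqref{eq:estim_potential}, the coupling $\mu_\eps$, second-order Taylor expansion, the parity cancellation of the $(z_j-y_j)$-piece, and the reconstitution of the denominator via the one-particle marginal before extracting $\int|\nabla\rho_\bP|$ from $\rho_\bP(z_j+u)-\rho_\bP(z_j)$. Your write-up is in places more explicit (notably the vanishing of the kinetic cross term and the probabilistic phrasing via conditional means), but the argument is the same.
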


\begin{proof}
We have
\begin{multline*}
\tr(-\Delta)\Gamma_\eps= \sum_{\ell=1}^N\iint_{\R^{dN}\times \R^{dN}}
\left(\int_{\R^{dN}}\left|\nabla_{x_\ell}\left( \sqrt{\rho_\bP}^{\otimes N}\chi_{\eps,z_1}\wedge\cdots\wedge\chi_{\eps,z_N}\right)\right|^2\right)\times\\
\times\prod_{k=1}^N\frac{\chi_\eps(z_k-y_k)^2}{\rho_\bP\ast\chi_\eps^2(z_k)}d\bP(y_1,...,y_N)\;dz_1\cdots dz_N.
\end{multline*}
Similarly as in~\eqref{eq:formule_carre}, 
\begin{multline*}
\left|\nabla_{x_\ell}\left(\sqrt{\rho_\bP}\chi_{\eps,z_1}\wedge\cdots\wedge\sqrt{\rho_\bP}\chi_{\eps,z_N}\right)\right|^2\\
=\frac{1}{N!}\sum_{\sigma\in\gS_N}\left|\nabla\big(\sqrt{\rho_\bP}\chi_{\eps,z_{\sigma(\ell)}}\big)(x_\ell)\right|^2\prod_{\substack{k=1,...,N\\ k\neq\ell}} \rho_\bP(x_k)\chi_{\eps,z_{\sigma(k)}}(x_k)^2. 
\end{multline*}
Integrating over $x_1,...,x_N$, we obtain
\begin{multline*}
\sum_{\ell=1}^N\int_{\R^{dN}}\left|\nabla_{x_\ell}\left( \sqrt{\rho_\bP}^{\otimes N}\chi_{\eps,z_1}\wedge\cdots\wedge\chi_{\eps,z_N}\right)\right|^2\\
=\sum_{j=1}^N\left(\int_{\R^d}\left|\nabla\big(\sqrt{\rho_\bP}\chi_{\eps,z_{j}}\big)(x)\right|^2\,dx\right)\prod_{\substack{k=1,...,N\\ k\neq j}}\rho_\bP\ast \chi_\eps^2(z_j).
\end{multline*}
Finally, integrating over the $y_k$'s and $z_k$'s, we conclude that
\begin{align*}
\tr(-\Delta)\Gamma_\eps&=N\int_{\R^d}\int_{\R^d}\left|\nabla\big(\sqrt{\rho_\bP}\chi_{\eps,z}\big)(x)\right|^2\,dx\,dz\\
&=N\left(\int_{\R^d}|\nabla\sqrt{\rho_\bP}(x)|^2\,dx+\frac{1}{\eps^2}\int_{\R^d}|\nabla\chi(x)|^2\,dx\right). 
\end{align*}
To prove~\eqref{eq:estim_potential}, we remark that 
$$\iint_{\R^{2dN}}\prod_{k=1}^N\frac{\rho_\bP(x_k)\chi_\eps(x_k-z_k)^2\chi_\eps(z_k-y_k)^2}{\rho_\bP\ast\chi_\eps^2(z_k)}dx_1\cdots dx_N\;dz_1\cdots dz_N=1.$$
Hence, using the shorter notation $X=(x_1,...,x_N)$, $Y=(y_1,...,y_N)$ and $Z=(z_1,...,z_N)$, we have
\begin{multline*}
\int_{\R^{dN}}\Phi\, d\bP_\eps-\int_{\R^{dN}}\Phi\, d\bP\\
=\iint_{\R^{3dN}}\big(\Phi(X)-\Phi(Y)\big)\prod_{k=1}^N\frac{\rho_\bP(x_k)\chi_\eps(x_k-z_k)^2\chi_\eps(z_k-y_k)^2}{\rho_\bP\ast\chi_\eps^2(z_k)}
d\bP(Y)\;dX\;dZ.
\end{multline*}
Since $|x_j-y_j|\leq 2\eps$ due to the support of $\chi$, we conclude from the fundamental theorem of calculus that
\begin{multline*}
\bigg|\int_{\R^{dN}}\Phi\, d\bP_\eps-\int_{\R^{dN}}\Phi\, d\bP
-\iint_{\R^{3dN}}d\bP(Y)\,dX\,dZ\left(\sum_{\ell=1}^N\nabla_{\ell}\Phi(Y)\cdot (x_\ell-y_\ell)\right)\times\\
\times\prod_{k=1}^N\frac{\rho_\bP(x_k)\chi_\eps(x_k-z_k)^2\chi_\eps(z_k-y_k)^2}{\rho_\bP\ast\chi_\eps^2(z_k)}
\bigg|
\leq 2\eps^2\sum_{\ell,m=1}^N\norm{\nabla_{x_\ell}\nabla_{x_m}\Phi}_{L^\ii(D^c_{\alpha-4\eps})}.
\end{multline*}
It remains to estimate the term involving $\nabla_\ell\Phi$. By symmetry, it is sufficient to look at the case $\ell=1$, which can be expressed in the form
\begin{multline*}
\iint_{\R^{3dN}}\nabla_{1}\Phi(Y)\cdot (x_1-y_1)\prod_{k=1}^N\frac{\rho_\bP(x_k)\chi_\eps(x_k-z_k)^2\chi_\eps(z_k-y_k)^2}{\rho_\bP\ast\chi_\eps^2(z_k)}d\bP(Y)\,dX\,dZ\\
=\iint_{\R^{dN+2d}}\nabla_{1}\Phi(Y)\cdot (x_1-y_1)\frac{\rho_\bP(x_1)\chi_\eps(x_1-z_1)^2\chi_\eps(z_1-y_1)^2}{\rho_\bP\ast\chi_\eps^2(z_1)}d\bP(Y)\,dx_1\,dz_1.
\end{multline*}
Here we can replace $x_1-y_1$ by $x_1-z_1$ since 
$\int_{\R^d}(z_1-y_1)\chi_\eps(z_1-y_1)^2\,dz_1=0,$ 
due to the symmetry of $\chi$. We then estimate 
\begin{align*}
&\bigg|\iint_{\R^{3dN}}\nabla_{1}\Phi(Y)\cdot (x_1-z_1)\frac{\rho_\bP(x_1)\chi_\eps(x_1-z_1)^2\chi_\eps(z_1-y_1)^2}{\rho_\bP\ast\chi_\eps^2(z_1)}d\bP(Y)\,dx_1\,dz_1\bigg|\\[0.2cm]
&\ \leq\norm{\nabla_1\Phi}_{L^\ii(D^c_{\alpha-4\eps})}\times\\
&\quad\times\iint_{\R^{2d}}\left|\int_{\R^d}(x_1-z_1)\rho_\bP(x_1)\chi_\eps(x_1-z_1)^2\,dx_1\right|\frac{\chi_\eps(z_1-y_1)^2}{\rho_\bP\ast\chi_\eps^2(z_1)}\,\rho_{\bP}(y_1)\,dz_1\,dy_1\\
&\ = \norm{\nabla_1\Phi}_{L^\ii(D^c_{\alpha-4\eps})}\int_{\R^{d}}\left|\int_{\R^d}(x-z)\rho_\bP(x)\chi_\eps(x-z)^2\,dx\right|\,dz.
\end{align*}
By the symmetry of $\chi$ and the fundamental theorem of calculus, we have 
\begin{align*}
&\left|\int_{\R^d}(x-z)\rho_\bP(x)\chi_\eps(x-z)^2\,dx\right|\\
&\qquad\qquad=\left|\int_{\R^d}(x-z)\big(\rho_\bP(x)-\rho_\bP(z)\big)\chi_\eps(x-z)^2\,dx\right|\\
&\qquad\qquad\leq \int_0^1\int_{\R^d}|\nabla\rho_\bP(z+tu)|\,|u|^2\chi_\eps(u)^2\,du.
\end{align*}
Integrating over $z$ we find the claimed estimate
\begin{multline*}
\bigg|\iint_{\R^{3dN}}\nabla_{1}\Phi(Y)\cdot (x_1-y_1)\prod_{k=1}^N\frac{\rho_\bP(x_k)\chi_\eps(x_k-z_k)^2\chi_\eps(z_k-y_k)^2}{\rho_\bP\ast\chi_\eps^2(z_k)}d\bP(Y)\,dX\,dZ\bigg|\\
\leq \eps^2\norm{\nabla_1\Phi}_{L^\ii(D^c_{\alpha-4\eps})}\left(\int_{\R^{d}}|\nabla\rho_\bP|\right)\left(\int_{\R^d}|u|^2\chi(u)^2\,du\right).
\end{multline*}
\end{proof}

\section{Semiclassical limit of the Levy-Lieb functional}\label{sec:Levy-Lieb}

Here we restrict ourselves for simplicity to the physical space $\R^3$ and the Coulomb potential. 
Density Functional Theory is based on the following functional~\cite{Levy-79,Lieb-83b,CanDefKutLeBMad-03}
\begin{equation}
\cE(\rho)=\min_{\substack{\Gamma=\Gamma^*\geq0\\ \tr(\Gamma)=1\\ \rho_\Gamma=\rho}}\tr\left(-\sum_{j=1}^N\Delta_{x_j}+\sum_{1\leq j<k\leq N}\frac{1}{|x_j-x_k|}\right)\Gamma,
\label{eq:Levy-Lieb}
\end{equation}
of the density $\rho$, a given non-negative function such that $\int_{\R^3}\rho=N$ and $\sqrt{\rho}\in H^1(\R^3)$. In the minimum $\Gamma$ is an operator acting on the fermionic space $\bigwedge_1^NL^2(\R^3)$. Motivated by arguments of Hohenberg and Kohn~\cite{HohKoh-64}, Levy introduced in~\cite{Levy-79} a functional similar to~\eqref{eq:Levy-Lieb} but with the additional constraint that $\Gamma=|\Psi\rangle\langle\Psi|$ is a rank-one orthogonal projection (pure state). The latter was rigorously studied by Lieb in~\cite{Lieb-83b}, who proposed to extend the definition to mixed states, as in~\eqref{eq:Levy-Lieb}. The minimum over mixed states~\eqref{eq:Levy-Lieb} has better mathematical properties than with pure states~\cite{Lieb-83b}. For instance, $\cE$ is convex and, by the linearity in $\Gamma$, we have the dual formulation
\begin{multline*}
\cE(\rho)=\sup \bigg\{\int_{\R^3}\rho(x)\,V(x)\,dx\ :\ V\in L^{3/2}(\R^3,\R)+L^\ii(\R^3,\R),\\ -\sum_{j=1}^N\Delta_{x_j}-\sum_{j=1}^NV(x_j)+\sum_{1\leq j<k\leq N}\frac{1}{|x_j-x_k|}\geq0\bigg\},
\end{multline*}
which is the quantum equivalent of the Kantorovich duality used in optimal transport~\cite{Villani-09}. The last inequality is in the sense of self-adjoint operators. 

It is possible to introduce an effective semi-classical parameter $\eta=\hbar^2$ by scaling the density $\rho$. Namely, for $\rho_\eta(x)=\eta^{3}\rho(\eta x)$ we have 
\begin{equation}
\frac{\cE(\rho_\eta)}{\eta}=\min_{\substack{\Gamma=\Gamma^*\geq0\\ \tr(\Gamma)=1\\ \rho_\Gamma=\rho}}\tr\left(-\eta\sum_{j=1}^N\Delta_{x_j}+\sum_{1\leq j<k\leq N}\frac{1}{|x_j-x_k|}\right)\Gamma.
\label{eq:Levy-Lieb_eta}
\end{equation}
In the limit $\eta\to0$, we prove the convergence to the Coulomb multi-marginal optimal transport problem
\begin{equation}
\cE_{\rm OT}(\rho)=\min_{\substack{\bP\ \text{symmetric}\\ \text{probability on $(\R^3)^N$}\\ \rho_\bP=\rho}}\int_{(\R^3)^N}\sum_{1\leq j<k\leq N}\frac{1}{|x_j-x_k|}d\bP(x_1,...,x_N),
\label{eq:OT}
\end{equation}
which has recently received a lot of attention~\cite{ButPasGor-12,CotFriKlu-13,CotFriPas-15,ColMar-15,MarGerNen-15,SeiMarGerNenGieGor-17}

\begin{theorem}[Semi-classical limit]
Let $N\geq2$ and let $\rho\geq0$ be such that $\int_{\R^3}\rho=N$ and $\sqrt{\rho}\in H^1(\R^3)$. Then we have for a constant $C$ (depending on $N$ and $\rho$)
\begin{equation}
\cE_{\rm OT}(\rho)\leq \frac{\cE\left(\eta^3\rho(\eta\,\cdot)\right)}\eta\leq \cE_{\rm OT}(\rho)+C(\sqrt{\eta}+\eta).
\label{eq:estim_semiclassics}
\end{equation}
In particular,
$$\lim_{\eta\to0}\frac{\cE\left(\eta^3\rho(\eta\,\cdot)\right)}\eta= \cE_{\rm OT}(\rho).$$
\end{theorem}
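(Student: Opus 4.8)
The plan is to prove the two inequalities in~\eqref{eq:estim_semiclassics} separately, the lower bound being the soft direction and the upper bound being where Theorem~\ref{thm:estimates} does the work.

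\textbf{Lower bound.} The inequality $\cE_{\rm OT}(\rho)\leq\cE(\rho_\eta)/\eta$ follows from a standard compactness/semi-continuity argument, so I would only sketch it. Using the scaled formulation~\eqref{eq:Levy-Lieb_eta}, pick for each $\eta$ a minimizer $\Gamma_\eta$ with $\rho_{\Gamma_\eta}=\rho$, so that $\cE(\rho_\eta)/\eta=\eta\,\tr(-\Delta)\Gamma_\eta+\tr\big(\sum_{j<k}|x_j-x_k|^{-1}\big)\Gamma_\eta$. Assuming the left side stays bounded (otherwise there is nothing to prove), the kinetic term $\tr(-\Delta)\Gamma_\eta$ may blow up, but only like $O(1/\eta)$; hence the $N$-point correlation measures $\bP_\eta$ associated to $\Gamma_\eta$ (the diagonals of the kernels, which are symmetric probability measures on $(\R^3)^N$ with marginal $\rho$) form a tight family, since $\rho$ is fixed. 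Extract a weakly convergent subsequence $\bP_\eta\wto\bP$; the limit is a symmetric probability with $\rho_\bP=\rho$. Since $|x_j-x_k|^{-1}$ is lower semi-continuous and nonnegative, $\int\sum_{j<k}|x_j-x_k|^{-1}d\bP\leq\liminf\int\sum_{j<k}|x_j-x_k|^{-1}d\bP_\eta\leq\liminf\cE(\rho_\eta)/\eta$, and the left side is $\geq\cE_{\rm OT}(\rho)$. (One should truncate the Coulomb cost at a level $M$ and let $M\to\infty$ to make the lsc argument clean.)

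\textbf{Upper bound.} This is where~\eqref{eq:quantum_state} and Theorem~\ref{thm:estimates} enter. Fix $\delta>0$ and choose a near-optimal symmetric probability $\bP$ for $\cE_{\rm OT}(\rho)$. The first obstacle is that Theorem~\ref{thm:estimates} requires $\supp\bP\subset D_\alpha^c$ for some $\alpha>0$, whereas a general near-optimizer need not be supported away from the diagonals; however, the Coulomb cost is $+\infty$ on collisions, and by~\cite{ButChaPas-17} (cited in the excerpt) true minimizers \emph{are} supported in some $D_\alpha^c$, or at worst one approximates $\bP$ by cutting out an $\alpha$-neighborhood of the diagonals and renormalizing, paying only a small error in the cost and perturbing $\rho$ slightly. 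Also $\sqrt{\rho}\in H^1$ is assumed, and $\int|\nabla\rho|$ may be infinite; here one regularizes $\rho$ itself (convolution/truncation) so that $\int|\nabla\rho|<\infty$, at the cost of another small, controllable change. Granting such a $\bP$ with $\supp\bP\subset D_\alpha^c$ and $\int|\nabla\rho_\bP|<\infty$, form $\Gamma_\eps$ as in~\eqref{eq:quantum_state}. It is a fermionic state with $\rho_{\Gamma_\eps}=\rho_\bP$, and by~\eqref{eq:estim_kinetic}, $\tr(-\Delta)\Gamma_\eps=N\int|\nabla\sqrt{\rho_\bP}|^2+N\eps^{-2}\int|\nabla\chi|^2$. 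Applying~\eqref{eq:estim_potential} with $\Phi(X)=\sum_{j<k}|x_j-x_k|^{-1}$, which is $C^2$ on $D^c_{\alpha-4\eps}$ with derivative bounds of order $(\alpha-4\eps)^{-2}$ and $(\alpha-4\eps)^{-3}$ (uniform in $\eps<\alpha/8$ say), gives $\tr(\Phi\Gamma_\eps)\leq\int\Phi\,d\bP+C\eps^2$. Therefore
\begin{equation*}
\frac{\cE(\rho_\eta)}{\eta}\leq\eta\left(N\int_{\R^3}|\nabla\sqrt{\rho_\bP}|^2+\frac{N}{\eps^2}\int_{\R^3}|\nabla\chi|^2\right)+\int_{(\R^3)^N}\Phi\,d\bP+C\eps^2.
\end{equation*}
Optimizing the free parameter $\eps$: the $\eta$-dependent terms are $C_1\eta+C_2\eta/\eps^2+C_3\eps^2$, minimized by $\eps\sim\eta^{1/4}$, which yields an error $O(\sqrt{\eta})$ from the kinetic/potential trade-off plus $O(\eta)$ from the $\int|\nabla\sqrt{\rho_\bP}|^2$ term, i.e. $\cE(\rho_\eta)/\eta\leq\int\Phi\,d\bP+C(\sqrt\eta+\eta)$. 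Finally let $\delta\to0$ and remove the regularizations of $\bP$ and $\rho$; since the right side is $\cE_{\rm OT}(\rho)+\delta+C(\sqrt\eta+\eta)$ and the constant $C$ can be taken uniform, we obtain~\eqref{eq:estim_semiclassics}.

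\textbf{Main obstacle.} The genuinely delicate point is not the $\eps$-optimization (routine) but ensuring one can choose a near-optimal competitor $\bP$ simultaneously satisfying $\supp\bP\subset D_\alpha^c$ and $\int|\nabla\rho_\bP|<\infty$ while keeping the error $C(\sqrt\eta+\eta)$ with $C$ depending only on $N$ and $\rho$ — i.e. disentangling the three regularization parameters ($\alpha$ for the diagonal cutoff, the mollification scale for $\rho$, and $\eps$) so they do not interact badly in the constants. One also must check that truncating the unbounded Coulomb cost in the lower-bound argument is compatible with the weak convergence $\bP_\eta\wto\bP$, which is standard but needs the lsc/monotone-truncation step spelled out. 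The combination $\sqrt\eta+\eta$ in the statement strongly suggests $\eps\sim\eta^{1/4}$ is indeed the intended scaling, confirming the trade-off above.
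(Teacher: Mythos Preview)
Your upper-bound strategy matches the paper's exactly (use an optimal $\bP$, plug in $\Gamma_\eps$, apply Theorem~\ref{thm:estimates}, optimize $\eps\sim\eta^{1/4}$), but you have manufactured two obstacles that are not there, and you have overcomplicated the lower bound.

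\textbf{Lower bound.} No compactness or weak limits are needed. For any admissible $\Gamma$ with $\rho_\Gamma=\rho$, its diagonal $\bP_\Gamma(X)=\Gamma(X;X)$ is a symmetric probability on $(\R^3)^N$ with marginal $\rho$, and since the kinetic term is nonnegative,
\[
\eta\,\tr(-\Delta)\Gamma+\tr\Big(\sum_{j<k}|x_j-x_k|^{-1}\Big)\Gamma\;\ge\;\int\sum_{j<k}|x_j-x_k|^{-1}\,d\bP_\Gamma\;\ge\;\cE_{\rm OT}(\rho).
\]
Taking the infimum over $\Gamma$ gives the lower bound for every $\eta>0$ directly. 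The paper does not even comment on it.

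\textbf{Upper bound.} Your two ``delicate points'' are non-issues. First, the paper does not use a near-optimizer but an exact minimizer $\bP$ of $\cE_{\rm OT}(\rho)$; such a minimizer exists, and by~\cite{ButChaPas-17} (which you cite) it is automatically supported in some $D_\alpha^c$. So no $\delta$, no diagonal cutoff, no renormalization. Second, your claim that ``$\int|\nabla\rho|$ may be infinite'' is false under the hypothesis $\sqrt{\rho}\in H^1(\R^3)$: since $\nabla\rho=2\sqrt{\rho}\,\nabla\sqrt{\rho}$, Cauchy--Schwarz gives
\[
\int_{\R^3}|\nabla\rho|\;=\;2\int_{\R^3}\sqrt{\rho}\,|\nabla\sqrt{\rho}|\;\le\;2\Big(\int_{\R^3}\rho\Big)^{1/2}\Big(\int_{\R^3}|\nabla\sqrt{\rho}|^2\Big)^{1/2}<\infty,
\]
so $\rho\in W^{1,1}(\R^3)$ automatically and no mollification of $\rho$ is required. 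Consequently your ``main obstacle'' (managing three interacting regularization parameters) evaporates: there is only one parameter $\eps$, and the constant $C$ depends on $N$, $\alpha$, $\|\nabla\sqrt{\rho}\|_{L^2}$ and $\chi$ with no limit to perform. With these simplifications your argument is exactly the paper's proof.
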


This theorem generalizes the results in~\cite{CotFriKlu-13,BinPas-17} for $N=2,3$ in the pure state case. Results for $N\geq3$ have been announced 
in~\cite[Ref. 7]{FriMenPasCotKlu-13} but they were not yet available at the time this note was written. It would be interesting to extend our findings to pure states.

Semi-classical analysis suggests that the behavior in $\sqrt\eta$ is optimal for small $\eta$. The next order (in $\sqrt{\eta}=\hbar$) in the expansion of $\cE(\rho_\eta)/\eta$ was predicted in~\cite{GorVigSei-09}.

\begin{proof}
Let $\bP$ be an optimizer for $\cE_{\rm OT}(\rho)$. It has been shown in~\cite{ButChaPas-17} that $\bP$ has its support on $D^c_\alpha$ for some $\alpha>0$. Taking then our $\Gamma_\eps$ as a trial state, we find by Theorem~\ref{thm:estimates}
\begin{multline*}
\frac{\cE(\rho_\eta)}\eta \leq \eta N\left(\int_{\R^d}|\nabla\sqrt{\rho}|^2+\frac{1}{\eps^2}\int_{\R^d}|\nabla\chi|^2\right)+\cE_{\rm OT}(\rho)\\
+C\eps^2\left(\frac{N^3}{(\alpha-4\eps)^2}\int_{\R^{d}}|\nabla\rho_\bP|\int_{\R^d}|u|^2\chi(u)^2\,du+\frac{N^4}{(\alpha-4\eps)^3}\right). 
\end{multline*}
We have used here that $\Phi(X)=\sum_{1\leq j<k\leq N}|x_j-x_k|^{-1}$ is $C^\ii$ on $D^c_{\alpha-4\eps}$.
Optimizing in $\eps$ gives the result.
\end{proof}

\begin{remark}
The convergence of states in the limit $\eta\to0$ can be proved as in~\cite{BinPas-17}.
\end{remark}

\subsection*{Acknowledgement} This project has received funding from the European Research Council (ERC) under the European Union's Horizon 2020 research and innovation programme (grant agreement MDFT No 725528).


\end{document}